\newcommand{\p}{\partial}
\newcommand{\dd}{{\rm d}}
\newcommand{\bd}{\begin{definition}}                %inizia definizione
\newcommand{\ed}{\end{definition}}                  %fine definizione
\newcommand{\bc}{\begin{corollary}}                 %inizia corollario
\newcommand{\ec}{\end{corollary}}                   %fine corollario
\newcommand{\bl}{\begin{lemma}}                     %inizia lemma
\newcommand{\el}{\end{lemma}}                       %fine lemma
\newcommand{\bp}{\begin{proposition}}            %inizia proposizione
\newcommand{\ep}{\end{proposition}}                %fine proposizione
\newcommand{\bere}{\begin{remark}}                  %inizia osservazione
\newcommand{\ere}{\end{remark}}                     %fine oservazione
\newcommand{\bt}{\begin{theorem}}
\newcommand{\et}{\end{theorem}}
\newcommand{\be}{\begin{equation}}
\newcommand{\ee}{\end{equation}}
\newcommand{\bit}{\begin{itemize}}
\newcommand{\eit}{\end{itemize}}
\newtheorem{theorem}{Theorem}[section]
\newtheorem{corollary}[theorem]{Corollary}
\newtheorem{lemma}[theorem]{Lemma}
\newtheorem{proposition}[theorem]{Proposition}
\theoremstyle{definition}
\newtheorem{definition}[theorem]{Definition}
\theoremstyle{remark}
\newtheorem{remark}[theorem]{Remark}
\begin{document}
\title{Low regularity extensions beyond Cauchy horizons}
%{Low Regularity Extensions beyond Cauchy Horizons}
\author{M.\ Lesourd\thanks{Black Hole Initiative, Harvard University, Cambridge, MA 02138. E-mail: mlesourd@fas.harvard.edu} \ and E.\ Minguzzi\thanks{Università degli Studi di Firenze, Dipartimento di Matematica e Informatica "U. Dini", Via S. Marta 3,
I-50139 Firenze, Italia. E-mail: ettore.minguzzi@unifi.it}}

\maketitle
\begin{abstract}

We prove that if in a $C^0$ spacetime a complete partial Cauchy hypersurface has a non-empty Cauchy horizon, then the horizon is caused by the presence of almost closed causal curves behind it or by the influence of points at infinity. This statement is related to strong cosmic censorship and a conjecture of Wald. In this light, Wald's conjecture can be formulated as a PDE problem about the location of Cauchy horizons inside black hole interiors.

\end{abstract}

\section{Introduction}
Penrose introduced strong cosmic censorship in the seminal paper \cite{penrose79}. In an elegant article appearing in the same volume, Geroch and Horowitz \cite{geroch79} elaborated on the differences between the strong and weak formulations of cosmic censorship, and sketched possible approaches one could take to prove such statements. In the context of Lorentzian geometry, they formulated a conjecture which they deemed to be in the spirit of strong cosmic censorship. This conjecture  was picked up and sharpened by Wald in Chapter 12 of his landmark book \cite{wald84}.
\begin{quote}
%\textsc{COSMIC CENSOR CONJECTURE}.
\textsc{Wald's Strong Cosmic Censorship}. Let $(S, h, K)$ be an
initial data set for Einstein's equation, with $(S, h)$ a complete Riemannian manifold
and with the Einstein-matter equations in which matter is described by a quasi-linear, diagonal, second order
hyperbolic system, and where  the stress-energy tensor satisfies the
dominant energy condition. Then, if the maximal Cauchy development of this initial
data is extendible, for each $p \in H^+(S)$ in any extension, either strong causality is violated at $p$ or $\overline{J^-(p)\cap S}$  is non-compact.\footnote{In Wald's book $J^-(p)$ is replaced by  $I^-(p)$, but the two formulations are easily shown to be equivalent.}
\end{quote}

Wald's conjecture seeks to pinpoint why Cauchy horizons form. That is, they are associated with one of two things: the formation of almost closed causal curves, or the possibility that some point on the horizon
 be influenced by spatial infinity. With regards to the basic examples, in Taub-NUT it is the former, and in Reissner-Nordstrom it is the latter. As for the Kerr-Newman family (with $a\neq 0$), both occur.    \\ \indent
With time, the weak and strong conjectures have been translated into PDE problems about the Cauchy development of generic initial data.

\begin{quote}
\textsc{Modern Strong Cosmic Censorship}. Maximal Cauchy developments of \textit{generic} initial data are inextendible within some \textit{regularity} class.
\end{quote}

Remarkable recent work on the subject \cite{cardoso18,dafermos17,kehle20,kehle21,luk17,luk19,vandemoortel21} shows that the SCC is highly sensitive to the setting in which one poses the problem, and the specific definitions of \textit{generic} and \textit{regular} at play. The SCC is now more of a theme of plausible statements (as opposed to a fixed universal statement), each more or less natural relative to the particular setting, and each having varying chances of being true. \\ \indent
In spite of the intricacies that the modern picture has unveiled, the simplicity and clarity of Wald's conjecture, viewed as a structure theorem for Cauchy horizons, grants it an enduring significance. Moreover, with the current emphasis on low regularity extensions, a modern version of Wald's conjecture ought to permit for $C^0$ extensions. \\ \indent
Before going further, we note that Wald's conjecture cannot hold for positive cosmological constant. A counterexample is provided by the non-extremal Reissner-Nordstr\"om-de Sitter spacetime. This is a vacuum solution of Einstein's equation with  $\Lambda>0$ which is strongly causal but for which we can find a spacelike partial Cauchy hypersurface such that for some $p\in H^+(S)$,  $\overline{J^-(p)\cap S}$ is compact (cf.\ Fig.\ \ref{bix}).
\begin{figure}[ht]
\centering
%\psfrag{p}{  \small \!\!\!\! $p$} \psfrag{a}{  \small \!\!\!\!\!\! $\mathscr{I}^+$} \psfrag{b}{  \small \!\!\!\!\!\! $\mathscr{I}^-$}  \psfrag{s}{ \small  \!\!\!\!\! $S$} \psfrag{d}{ \small  \!\!\!\!\!\!\!\!\!\! $D^+(S)$}
\includegraphics[width=10cm]{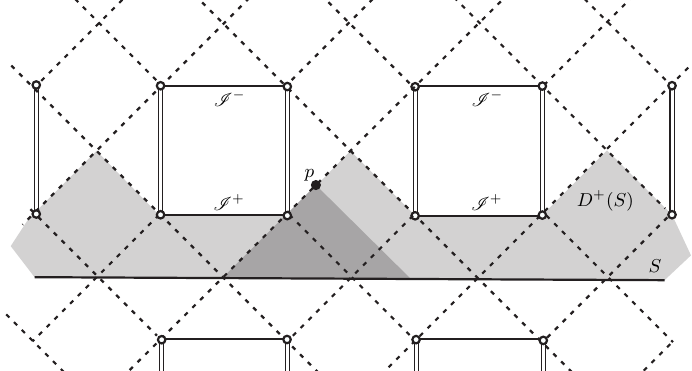}
\caption{The Carter-Penrose diagram of the maximal analytic extension of the non-extremal Reissner-Nordstr\"om de Sitter spacetime. The partial Cauchy hypersurface $S$ admits a point $p\in H^+(S)$ such that $\overline{J^-(p)\cap J^+(S)}$ is non-compact while $\overline{J^-(p)\cap S}$ is compact. The double line is a singularity. } \label{bix}
\end{figure}

It is also worth clarifying the status of Wald's conjecture in the smooth setting. In 2012-3 Etesi posted three preprints which eventually led to a publication \cite{etesi13}, in which Etesi claims to prove a weaker version of Wald's conjecture, where $\overline{J^-(p)\cap S}$ is replaced by $J^-(p)\cap S$. The stronger version is clearly Wald's, since it is easier to prove that a set is non-compact if it can be non-closed.

In a preprint version of our paper (arXiv:2110.07388) we pointed out that these results are not correct, for \cite[Lemma 2.3]{etesi13} contains an error which affects the main claim \cite[Theor.\ 2.1]{etesi13}. The fastest way to realize this is through the following example. Let $(M,g)$ be the 1+1 Minkowski spacetime of coordinates $(t,x)$, $g=-\dd t^2+\dd x^2$ with the timelike lines $x=-1$ and $x=+1$ identified and the point $r=(2,0)$ removed. For  $S=t^{-1}(0)$ and $p=(2.5,0.5)$ we have $p\in H^+(S)$, $J^-(p)\cap S$ is compact but neither $J^-(p)\cap J^+(S)$ nor its closure are compact.

It must be said that Etesi seems to assume that $D(S)$ is the maximal Cauchy development of some initial data, not any Cauchy development (notice that the above example is Cauchy-holed, see  \cite{minguzzi12c} for a definition of hole-freeness). Still, the example signals that there is a problem with the proof of \cite[Lemma 2.3]{etesi13} because that proof does not use  the fact that $D(S)$ is the maximal Cauchy development.\footnote{Indeed, the error can be more precisely identified as follows.
Etesi considers a sequence of curves $\{\lambda_j\}:[0,a_j]\to J^-(x)\cap J^+(S)$ with past endpoints $\lambda_j(0)=p_j$ in $\overline{J^-(x)\cap S}$ and future endpoint $x$. Assuming strong causality of $J^-(x)\cap J^+(S)$, the goal is to infer the compactness of $J^-(x)\cap J^+(S)$ from that of $\overline{J^-(x)\cap S}$. Compactness of $\overline{J^-(x)\cap S}$ is used to obtain an accumulation point $p\in \overline{J^-(x)\cap S}$ for the sequence $\{p_j\}$. At this point, Etesi uses \cite[Proposition 3.31]{beem96} to infer the existence of a limit curve $\{\lambda_j\}\to \lambda$ passing through $p$. So far this is correct. Etesi then cites \cite[Proposition 3.31]{beem96} to claim that the limit curve $\lambda$ also passes through $x$, but this is not implied by the construction in \cite[Proposition 3.31]{beem96}, which only guarantees that $\lambda$ passes through at least one accumulation point, $p$ in this case.}

\begin{remark}
For completeness, we notice that after our preprint was posted (2110.07388v1) Etesi posted a new version (1205.4550v4) in which $J^-(p)\cap S$ is replaced by $\overline{J^-(p)\cap S}$ in both  statements  \cite[Lemma 2.3]{etesi13} and \cite[Theor.\ 2.1]{etesi13}. In the new Theorem 2.1 Etesi is claiming to prove exactly Wald's version. This revision still contains the error pointed out in the previous footnote. For a quicker way to see that the new Theorem 2.1 is incorrect, note that since it is identical to Wald's version as stated above, but does not involve some specification excluding $\Lambda>0$, the statement claimed is manifestly false by the Reissner-Norstr\"om-de Sitter example above (as there $D(S)$ is the maximal Cauchy development).
\end{remark}
Let us now discuss recent developments of non-regular spacetime geometry.

In recent years there have been numerous studies  under a $C^0$ assumption. Fathi and Siconolfi studied the existence of time functions under such regularity\cite{fathi12}, and Chru{\'s}ciel and Grant \cite{chrusciel12} presented the first   investigation of causality theory under the $C^0$ assumption.
%They discovered  the phenomenon of {\em causal bubbling}, namely the fact that the closure of the chronological future of an event  might not coincide with the closure of the causal future.
 Sbierski \cite{sbierski15,sbierski18} studied causality further, and proved the $C^0$ inextendibility of {S}chwarzschild spacetime. S{\"a}mann \cite{samann16} studied  stable causality and global hyperbolicity and their characterizations.
Galloway and Ling \cite{galloway17b,ling20} proved the $C^0$ inextendibility of AdS spacetime and showed extendibility through the Big Bang of  Milne-like hyperbolic FLRW spacetimes.
Chru\'sciel and Klinger obtained a $C^0$-inextendibility criterion for expanding singularities \cite{chrusciel18}. Galloway, Ling  and Sbierski \cite{galloway18b} proved that in globally hyperbolic spacetimes timelike completeness guarantees $C^0$ inextendibility, a result improved by Minguzzi and Suhr \cite{minguzzi19b}  who showed that global hyperbolicity could in fact be dropped.  Bernard and Suhr \cite{bernard18,bernard19} and Minguzzi \cite{minguzzi17,minguzzi19d} obtained several results on regularity of time functions and Cauchy hypersurfaces under even weaker conditions.
Grant, Kunzinger, S{\"a}mann  and Steinbauer clarified  properties and pathologies of the chronological future under low regularity \cite{grant20}.

%The second author developed  causality theory under very low regularity assumptions in \cite{minguzzi17}.

In our study, we shall make repeated use of the causality theory developed by the second author in \cite{minguzzi17}.
%This work contains result that cannot be found elsewhere and that will play a role in what follows.
%The second author developed causality theory under very low regularity assumptions, namely for closed and proper cone structures  \cite{minguzzi17}.
The regularity permitted in \cite{minguzzi17} is based on the framework of closed and proper cone structures, and the results there are valid for Lorentz-Finsler theory under upper semi-continuity conditions on the cone distribution and Finsler fundamental function. Spacetimes endowed with $C^0$ metrics are special cases of proper cone structures, and so the results on Cauchy horizons developed in that work apply in our case. In fact, it is likely that the results of the present work could hold for the more general proper cone structures of \cite{minguzzi17}, but owing to being motivated by PDEs, we do not see the need to consider lower than $C^0$ regularity.

Although Wald's formulation involves conditions on the initial data (which guarantees the existence of a unique, up to isometry, maximal Cauchy development), one can isolate the purely Lorentzian geometric component of the statement. In particular, what we prove is the following

\begin{theorem} \label{one}
Let $(M,g)$ be a $C^0$ spacetime and let $S$ be a complete Cauchy $C^1$  spacelike hypersurface for $M$. Suppose that $(M,g)$ has a $C^0$ extension $(M',g')$ in which $S$ is still acausal and that $H:=H^+(S,M')\ne \emptyset$.
 Let $p$ be a point in $H$. Then there is a past inextendible geodesic generator $\gamma$ of the horizon with future endpoint $p$ that either
 \begin{itemize}
 \item[(a)] it is past imprisoned in a compact subset of the horizon, and so accumulates in the past direction on a compact region where strong causality (actually distinction)  is violated,
 \item[(b)]  it  escapes every compact set in the past direction and $\overline{J^-(p)\cap J^+(S)}$  is non-compact.
 \end{itemize}
\end{theorem}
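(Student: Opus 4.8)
The plan is to obtain $\gamma$ from the horizon structure theory of \cite{minguzzi17} and then to run a dichotomy on the past limit set of $\gamma$. First I would record that, for a closed acausal set, $H=H^+(S,M')$ is an achronal topological hypersurface generated by past-inextendible null geodesics, and that through $p$ there passes such a generator $\gamma$, past-inextendible and with future endpoint $p$, lying on $H$. Since $H$ is achronal so is $\gamma$; thus $\gamma\subset J^-(p)$, while no point of $\gamma$ other than $p$ lies in $I^-(p)$.

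The decisive step, and the one I expect to be hardest, is the inclusion $\gamma\subset C:=\overline{J^-(p)\cap J^+(S)}$. Two facts feed into it. Because $S$ is complete, hence edgeless, and acausal in $M'$, its horizon generators never meet $S$ and $H\subset I^+(S)$, so every $q\in\gamma$ lies in the open set $I^+(S)$. Next I would use the closure (push-up) relation $J^-(p)\subset\overline{I^-(p)}$, available for the proper cone structures of \cite{minguzzi17}: given $q\in J^-(p)$ choose $a_n\to q$ with $a_n\in I^-(p)$; since $I^+(S)$ is open and contains $q$, we have $a_n\in I^+(S)$ for $n$ large, whence $a_n\in I^-(p)\cap I^+(S)\subset J^-(p)\cap J^+(S)$ and $a_n\to q$. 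The difficulty is concentrated precisely here: the generator meets $p$ \emph{achronally}, so $q\notin I^-(p)$ and the inclusion cannot be read off directly but rests entirely on the low-regularity closure relation --- exactly the property that fails for general continuous metrics and which must be imported from the proper-cone setting of \cite{minguzzi17}.

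With $\gamma\subset C$ established I would split according to the past limit set $L:=\bigcap_{s}\overline{\gamma((-\infty,s])}$. If $L=\emptyset$ then $\gamma$ leaves every compact set in the past direction, so $\overline{\gamma}\subset C$ is non-compact and hence so is $C$; this is conclusion (b). If $L\neq\emptyset$ then $\gamma$ accumulates on $L\subset H$, and I would invoke the imprisonment and limit-curve theorems of \cite{minguzzi17}: past accumulation in a compact region produces an inextendible imprisoned causal curve on which the distinction property --- a fortiori strong causality --- fails, and this failure occurs on a compact subset of the horizon. This is conclusion (a).

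The one technical wrinkle I would have to settle carefully is the partially imprisoned case, where $L\neq\emptyset$ but $\gamma$ is unbounded: there the limit-curve argument still delivers the compact subset of $H$ on which distinction fails, so the accumulation statement of (a) holds, and the clauses (a) and (b) are governed cleanly by whether $\gamma$ possesses a past accumulation point --- equivalently, in view of the inclusion, by the compactness of $\overline{J^-(p)\cap J^+(S)}$, the very set appearing in Wald's conjecture.
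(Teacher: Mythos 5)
Your overall architecture (extract a past-inextendible generator $\gamma\subset H$ through $p$, then run a compactness dichotomy) matches the paper's, but the step you single out as decisive is handled incorrectly, and the step that is actually decisive is skipped. You derive $\gamma\subset\overline{J^-(p)\cap J^+(S)}$ from the relation $J^-(p)\subset\overline{I^-(p)}$, which you claim is ``available for the proper cone structures of \cite{minguzzi17}.'' It is not: sets $J^-(p)\setminus\overline{I^-(p)}$ are exactly the causal bubbles, which can be non-empty for merely continuous metrics (and hence for proper cone structures, of which $C^0$ spacetimes are special cases); their absence is equivalent to the push-up property, guaranteed only under, e.g., a locally Lipschitz assumption. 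So this lemma cannot be ``imported'' from anywhere in the present setting. Fortunately it is also unnecessary: since $\gamma$ is a continuous causal curve with future endpoint $p$ one has $\gamma\subset J^-(p)$ outright, and since $\gamma\subset H\subset \overline{D^+(S)}$ with $H^+(S)\subset D^+(S)\subset J^+(S)$, one gets $\gamma\subset J^-(p)\cap J^+(S)$ with no closure and no push-up. This is what the paper does: if $\overline{J^-(p)\cap J^+(S)}$ is compact, then $\gamma$ lies in the compact set $\overline{J^-(p)\cap H}\subset H$, is totally past imprisoned there, and \cite[Thm.\ 2.22]{minguzzi17} yields the accumulation on a compact region where distinction fails; otherwise one is in case (b). Note that the paper's dichotomy is on the compactness of $\overline{J^-(p)\cap J^+(S)}$, not on the emptiness of the past limit set $L$; your version leaves the partially-imprisoned-but-unbounded case genuinely unresolved (there neither the imprisonment clause of (a) nor the escape clause of (b) is delivered by your argument), whereas compactness of the closure forces total imprisonment at once.

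The second, and larger, omission is the assertion ``$S$ is complete, hence edgeless.'' This is the paper's Lemma \ref{lem} and is the main new technical content of the proof: in $C^0$ regularity there is no Hopf--Rinow theorem, $S$ may fail to admit any $C^1$ achronal extension near a putative edge point, and $S$ could degenerate to ``lightlike'' there, so one cannot argue as in the smooth category. The paper proves it by trapping $S$ near an edge point inside a coordinate cylinder, comparing $g$ with an auxiliary flat metric whose cones are wider, projecting onto a spatial disk to bound intrinsic lengths, and deriving a contradiction with completeness via a uniform bound on the induced distance. Without this lemma, Theorem \ref{juf} only guarantees a generator that is \emph{either} past inextendible \emph{or} terminates on $S$, so the very existence of the past-inextendible $\gamma$ you start from is not yet established. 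You should either supply an argument for edgelessness or flag it as the key lemma to be proved rather than an immediate consequence of completeness.
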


Note that although (a) implies that strong causality is violated somewhere on the horizon, this need not occur at $p$. Rather, $p$ is connected to the imprisoning region by way of a null achronal generator.

As for (b), note that strong causality violation can occur without (a) happening at all. For example, every point beyond the Cauchy horizon of maximally extended Kerr is traversed by a closed timelike curve. This means that there cannot be arbitrarily small causally convex neighborhoods around $p\in H^+(S)$. Thus strong causality violation at $p$ can occur in (b) without the imprisonment in (a) occuring.

Note also that we allow for the original spacetime $(M,g)$ to be $C^0$, whereas Wald's formulation presumes that $(M,g)$ is at least $C^2$. Our reason is simply because we can prove the result in such regularity, and that we find it convenient to use the same regularity for the original and extended spacetimes. In this way we need only to prove\footnote{The spacetime previously denoted $M'$ is now denoted $M$, that previously denoted $M$ is now  $\textrm{Int} D(S)$.}

\begin{theorem} \label{two}
Let $(M,g)$ be a $C^0$ spacetime and let $S$ be a complete acausal $C^1$  spacelike hypersurface.  Let $p \in H:=H^+(S)$. Then there is a past inextendible geodesic generator $\gamma$ of the horizon with future endpoint $p$ that either
 \begin{itemize}
 \item[(a)] it is past imprisoned in a compact subset of the horizon, and so accumulates in the past direction on a compact region where strong causality (actually distinction) is violated,
 \item[(b)]  it  escapes every compact set in the past direction and $\overline{J^-(p)\cap J^+(S)}$   is non-compact.
 \end{itemize}
\end{theorem}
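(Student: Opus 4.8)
The plan is to reduce the whole statement to the behaviour of a single past-inextendible null generator and then split into one clean topological alternative. First I would invoke the structure theory of Cauchy horizons in the closed--cone / $C^0$ setting of \cite{minguzzi17}. Since $(S,h)$ is complete, $S$ is properly embedded and edgeless, so $\mathrm{edge}(S)=\emptyset$; and since $S$ is acausal the future Cauchy development and its horizon $H=H^+(S)$ have the standard structure: $H$ is closed and achronal, and through the chosen $p\in H$ there passes a null geodesic generator $\gamma$ (a geodesic for the cone structure) with future endpoint $p$. Because $\mathrm{edge}(S)=\emptyset$ this generator has no past endpoint in $M$, i.e.\ $\gamma$ is past-inextendible in $M$ with $\gamma\subseteq H$. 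I orient $\gamma$ into the past from $p$.

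I then split on the single exhaustive and exclusive alternative: \emph{does $\gamma$ escape every compact set to the past, or not?} If $\gamma$ does not escape every compact set, there is a compact $K$ and parameters tending to the past end along which $\gamma$ remains in $K$; the past limit set $\Lambda$ is then nonempty, and $\Lambda\cap K$ is a nonempty compact subset of $H$ (closed in $K$, contained in the closed set $H$) on which $\gamma$ accumulates. The classical imprisonment theorem, in the $C^0$/cone form of \cite{minguzzi17}, states that a causal curve partially imprisoned in a compact set forces a failure of the distinction property (a fortiori of strong causality) on its limit set; this yields alternative (a), the compact region being the imprisoned part of $\Lambda$.

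If instead $\gamma$ escapes every compact set, I must show $\overline{J^-(p)\cap J^+(S)}$ is non-compact. Since $\gamma$ ends at $p$ we have $\gamma\subseteq J^-(p)$, and since $\gamma\subseteq H\subseteq\overline{D^+(S)}\subseteq\overline{J^+(S)}$ the generator lies in $\overline{J^+(S)}$; the point is to promote this to $\gamma\subseteq\overline{J^-(p)\cap J^+(S)}$. The key lemma is local: near any $q=\gamma(s)$, choosing a generator point $q'$ slightly to its future, the set $\mathrm{int}\,D^+(S)$ accumulates at $q$ from the chronological-past side of $q'$, so there exist $z\in\mathrm{int}\,D^+(S)\cap I^-(q')$ arbitrarily close to $q$; any such $z$ satisfies $z\in I^+(S)$ and $z\ll q'\le p$, hence $z\in I^+(S)\cap I^-(p)\subseteq J^+(S)\cap J^-(p)$. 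Taking $q_k$ tending to the past end and escaping every compact set, and corresponding $z_k\to q_k$, the points $z_k$ also escape every compact set, so $J^-(p)\cap J^+(S)$ is contained in no compact set and its closure is non-compact, giving (b).

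The main obstacle is precisely this local lemma in case (b): because the horizon generator is null and achronal, no two of its points are timelike related and one cannot stay on $\gamma$ to enter the open diamond $I^+(S)\cap I^-(p)$; one must use the fine local causal structure of the horizon, namely that $\mathrm{int}\,D^+(S)$ sits on the timelike-past side of the future-neighbouring generator points. In $C^0$ regularity this local structure is delicate, and it is exactly here that the cone-theoretic machinery of \cite{minguzzi17} (limit curves, the push-up property $J^-\subseteq\overline{I^-}$, and the local description of achronal boundaries) must be brought to bear. By comparison, the imprisonment step in (a) is routine once its $C^0$ version is invoked, and the verification that completeness of $S$ forces $\mathrm{edge}(S)=\emptyset$ is immediate.
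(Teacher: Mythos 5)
Your decomposition is genuinely different from the paper's, and the difference is exactly where the trouble lies. You split on whether $\gamma$ escapes every compact set to the past. In the negative case you obtain only \emph{partial} past imprisonment ($\gamma$ returns to some compact $K$ along a sequence of parameters tending to the past end), and you then assert that ``the classical imprisonment theorem \dots states that a causal curve partially imprisoned in a compact set forces a failure of the distinction property on its limit set.'' That is not a theorem: partial imprisonment of an inextendible causal curve is incompatible with \emph{strong causality} holding on the compact set, but it is compatible with distinction, and the result the paper actually invokes (\cite[Thm.\ 2.22]{minguzzi17}) requires \emph{total} imprisonment, i.e.\ that the curve be contained in the compact set. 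Moreover, conclusion (a) of the theorem literally asserts total imprisonment (``past imprisoned in a compact subset of the horizon''), which your case hypothesis does not deliver. The paper sidesteps all of this by splitting instead on whether $\overline{J^-(p)\cap J^+(S)}$ is compact: since $\gamma\subset J^-(p)$ (it has future endpoint $p$) and $\gamma\subset H$ with $J^-(p)\cap H\subset J^-(p)\cap J^+(S)$, compactness of that closure places the \emph{entire} generator inside a compact subset of the closed set $H$ --- exactly total past imprisonment --- after which Thm.\ 2.22 applies; and if the closure is non-compact, the second clause of (b) is immediate. In particular the delicate local lemma you identify as the ``main obstacle'' in case (b) (approximating points of $\gamma$ by points of $\mathrm{int}\,D^+(S)\cap I^-(p)$ to show $\gamma\subset\overline{J^-(p)\cap J^+(S)}$) never arises in the paper's argument; it is an artifact of your choice of dichotomy, and under that dichotomy you would additionally have to repair case (a), e.g.\ by extracting a totally imprisoned limit generator, which you do not do.

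A second, smaller but real gap: you declare the implication ``completeness of $(S,h)$ $\Rightarrow$ $\mathrm{edge}(S)=\emptyset$'' to be immediate. The paper explicitly flags it as non-trivial and devotes Lemma \ref{lem} to it, precisely because in this regularity $S$ could become lightlike at the edge, no $C^1$ (or even no well-behaved) achronal extension need exist, and Hopf--Rinow is unavailable for the $C^0$ induced metric. The actual proof compares the induced distance on $S$ with a Euclidean coordinate distance via an auxiliary cone-narrowing bilinear form, lifts segments of the projection disk into a maximal achronal set containing $S$, and derives a uniform bound $\rho(q,r)\le\delta$ contradicting $\rho(q,r)\to\infty$ as $r\to p\in\mathrm{edge}(S)$. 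This step must be carried out, not asserted.
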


The proof of this statement, which largely follows from the theory developed in \cite{minguzzi17}, is given in the next section. This statement is of course consistent with the examples discussed above, and it seems the best version one can obtain in the context of low regularity causality theory. \\ \indent
Of course, some of the terms need to be clarified. For instance, on $S$ the metric $g$ induces a $C^0$ Riemannian metric $h$. By {\em complete} we mean that the closed balls with respect to the associated distance are compact (we do not have at our disposal the Hopf-Rinow theorem in such low regularity). The notion of {\em geodesic generator} will also be clarified in the next section.\\ \indent
Finally, in the spirit of Wald's original conjecture, we will also give necessary and sufficient conditions for $\overline{J^-(p)\cap S}$ to be noncompact. We do this in the context of spacetimes to which we can associate a concept of \textit{null infinity} $\mathcal{J}^+$. This assumption is inspired by Dafermos-Luk's recent groundbreaking study of strong cosmic censorship in vacuum. Their paper deals with strong cosmic censorship as a problem about the stability of Cauchy horizons in a given class of black hole spacetimes, i.e., spacetimes already possessing a future complete $\mathcal{J}^+$ along with an event horizon $\partial I^-(\mathcal{J}^+)\neq \emptyset$. This allows them to sidestep the issue of global existence in the exterior region (i.e., weak cosmic censorship guaranteeing a future complete $\mathcal{J}^+$). In that setting we show the following, where the precise definitions involving $\mathcal{J}^+$ are given below.
\begin{proposition}\label{three}
Let $(M,g)$ be a $C^2$ spacetime admitting a conformal embedding into the unphysical spacetime $(\tilde{M},\tilde{g})$ with null boundary $\mathcal{J}^+\subset \tilde{M}$ such that $\mathcal{J}^+\subset \overline{D^+(S)}$ where the closure is taken in the unphysical spacetime $(\tilde{M},\tilde{g})$. Let $(M',g')$ be a $C^0$ extension of $(M,g)$ with $H^+(S,M')\neq \emptyset$ and $p\in H^+(S,M')$. Then $\overline{J^-(p)\cap S}$ is noncompact if and only if $J^-(p)$ intersects every neighborhood of $\mathcal{J}^+$.
\end{proposition}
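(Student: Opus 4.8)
The plan is to prove the two implications separately, the common engine being the following geometric lemma, which I would establish first: for every compact $K\subset S$ one has $\overline{D^+(K)}\cap\mathcal{J}^+=\emptyset$, the closure being taken in $(\tilde M,\tilde g)$. Intuitively, the domain of dependence of a bounded piece of initial data cannot reach future null infinity, because a point near $\mathcal{J}^+$ is influenced by data all the way out to spatial infinity: from any such point there is a past-inextendible causal curve meeting $S$ arbitrarily far from $K$, hence missing $K$. I record two standard facts used throughout: if $q\in D^+(S)\cap I^+(S)$ then $J^-(q)\cap S$ is a nonempty compact subset of $S$, since every past-inextendible causal curve from $q$ meets the acausal hypersurface $S$ exactly once; and conversely each $s\in J^-(p)\cap S$ is joined to $p$ by a future-directed causal curve lying in $\overline{D^+(S,M')}\cap J^-(p)$.

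For the direction that reaching $\mathcal{J}^+$ forces noncompactness, assume $J^-(p)$ meets every neighborhood of $\mathcal{J}^+$ and fix $q_n\in J^-(p)\cap M$ with $q_n\to x\in\mathcal{J}^+$ in $\tilde M$. Because $\mathcal{J}^+\subset\overline{D^+(S)}$ lies to the future of $S$, after replacing $q_n$ by a point slightly to its past if necessary I may assume $q_n\in D^+(S)\cap I^+(S)$, so that $\emptyset\neq J^-(q_n)\cap S\subset J^-(p)\cap S$. I then argue by contradiction that $J^-(p)\cap S$ is unbounded: if it were contained in a closed metric ball $\bar B_R\subset S$ (compact by completeness of $(S,h)$), then for every $n$ the whole of $J^-(q_n)\cap S$ would lie in $\bar B_R$, so every past-inextendible causal curve from $q_n$ would meet $\bar B_R$, giving $q_n\in D^+(\bar B_R)$. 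Since $\overline{D^+(\bar B_R)}$ is closed in $\tilde M$ and, by the key lemma, disjoint from $x\in\mathcal{J}^+$, this contradicts $q_n\to x$. Hence $J^-(p)\cap S$ is unbounded and $\overline{J^-(p)\cap S}$ is noncompact.

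For the converse I argue contrapositively: if $J^-(p)$ avoids a neighborhood $V$ of $\mathcal{J}^+$, then $\overline{J^-(p)\cap S}$ is compact. If not, choose $s_n\in J^-(p)\cap S$ leaving every compact subset of $S$, with future causal curves $\lambda_n$ to $p$; as $\lambda_n\subset J^-(p)$, each avoids $V$. In $\tilde M$ the points $s_n$ accumulate at spatial infinity $i^0=\overline{\mathcal{J}^+}\setminus\mathcal{J}^+$, and the asymptotic structure encoded in $\mathcal{J}^+\subset\overline{D^+(S)}$ means a future-directed causal curve escaping from $i^0$ into the interior toward $p$ must cross every neighborhood of $\mathcal{J}^+$. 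A limit curve extracted from the $\lambda_n$ would be such a curve, forcing some $\lambda_n$ to meet $V$, a contradiction. Hence the footpoints remain in a fixed closed ball and $\overline{J^-(p)\cap S}$ is compact, completing the equivalence.

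The decisive difficulty is the key lemma, namely that $\mathcal{J}^+$ cannot lie in the closure of the domain of dependence of any compact $K\subset S$, and more generally controlling which part of the asymptotic boundary the past of $p$ accumulates on. A genuinely delicate point, requiring the hypothesis $\mathcal{J}^+\subset\overline{D^+(S)}$, is that $J^-(p)$ is computed in the $C^0$ extension $(M',g')$, whereas $\mathcal{J}^+$ is a boundary of the a priori unrelated conformal completion $(\tilde M,\tilde g)$; the two asymptotic descriptions of the physical region $M$ must be matched so that a sequence in $J^-(p)$ tending to $\mathcal{J}^+$ in $\tilde M$ interacts correctly with the projection to $S$. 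Making the limit curve argument of the third paragraph rigorous, given that the curves terminate at the point $p$, which lies in $M'$ and not in $\tilde M$, is where the main technical work would be concentrated.
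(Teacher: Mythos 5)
Your architecture is genuinely different from the paper's. The paper's proof is much softer: it introduces $S_{\textnormal{end}}$, the part of $S$ lying in a given neighborhood of $\mathcal{J}^+$, and runs both directions through the decomposition of $S$ into a compact core plus ends. For $(\Leftarrow)$ it asserts that $J^-(p)$ meeting every neighborhood of $\mathcal{J}^+$ produces a sequence in $J^-(p)\cap S_{\textnormal{end}}$ escaping every compact set of $S$; for $(\Rightarrow)$, contrapositively, if $J^-(p)$ misses a neighborhood $V$ of $\mathcal{J}^+$ then $\overline{J^-(p)\cap S}\subset C'=S\backslash S_{\textnormal{end}}$, and $C'$ is compact because it excludes the ends. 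No domain-of-dependence lemma and no limit curves appear. Your reduction of $(\Leftarrow)$ to the lemma $\overline{D^+(K)}\cap\mathcal{J}^+=\emptyset$ for compact $K\subset S$ is a genuinely different and attractive idea: it converts the asymptotic statement into a clean causal-theoretic one, and the step $J^-(q_n)\cap S\subset \bar B_R \Rightarrow q_n\in D^+(\bar B_R)$ gives a tighter argument than the paper's one-line assertion. What the paper's route buys is the complete avoidance of limit-curve extraction in the $C^0$ extension (where such arguments are delicate, and where, as you note, the curves terminate at $p\in M'$ rather than in $\tilde M$), at the cost of leaning on the unstated structural assumption that $S$ consists of a compact core together with ends, each of which is eventually swallowed by every neighborhood of $\mathcal{J}^+$.

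That said, your proposal is not yet a proof: the key lemma is announced but never established, and you say yourself that it, together with the limit-curve step in the converse direction, is where the main technical work would lie. The lemma does not follow from the bare hypotheses as stated; one needs to know that every point of $\mathcal{J}^+$ is approached only by points whose causal pasts meet $S$ outside any prescribed compact set, which is precisely the asymptotic-structure input that the paper encodes (equally informally) in the phrase ``$S$ is composed of a compact core and a union of ends.'' A second unresolved point, which you do flag, concerns $i^0$: an open neighborhood of the set $\mathcal{J}^+$ in $\tilde M$ need not contain a neighborhood of spatial infinity, so both the paper's claim that $S\backslash V$ is compact and your claim that the footpoints $s_n$ accumulating at $i^0$ force some $\lambda_n$ into $V$ require an argument matching the metric ends of $(S,h)$ to the conformal boundary. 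In short: you have correctly located the difficulties (more explicitly than the paper does), and your scaffolding for $(\Leftarrow)$ is in places sharper, but the load-bearing steps remain open in your write-up.
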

Thus, the character of $\overline{J^-(p)\cap S}$ is really about the location of Cauchy horizons (forming dynamically within black holes) relative to $\mathcal{J}^+$. \\ \indent
If, for example, we start with initial data $S$ that differs from a Cauchy surface of the maximal Schwarzschild solution only within a compact subset of the black hole interior, then, even if a Cauchy horizon forms in the development of $S$, $\overline{J^-(p)\cap S}$ remains compact by domain of dependence. \\ \indent
In view of this, Wald's conjecture can be given a modern reformulation as follows.
\begin{quote}
\textsc{Cauchy Horizon Conjecture}. Let $(S,g)$ be a complete Riemannian manifold, and let $(S,g,k)$ be \textit{generic} initial data for the Einstein system, possibly with non-zero cosmological constant $\Lambda$, and possibly with matter sources (in which case we may assume various admissibility conditions on $T_{ab}$). Suppose that the maximal Cauchy development $D(S)=M$ produces an event horizon $\partial I^-(\mathcal{J}^+)\neq \emptyset$. If $M$ admits a $C^0$ extension $M'\supset M$ and a Cauchy horizon $H^+(S,M')\neq \emptyset$, then \[ \overline{J^-(H^+(S,M'))\cap \partial I^-(\mathcal{J}^+)}\: \: \textnormal{is noncompact}\]

\end{quote}
This conjecture can be studied by setting up characteristic initial value problems in which one of the initial data hypersurfaces is an event horizon. This is exactly what is done in \cite{dafermos17} (though in \cite{dafermos17} they study two-ended data meaning that both hypersurfaces are event horizons), and note that the $C^0$ extensions constructed in \cite{dafermos17} do indeed satisfy the above statement.\\ \indent
Finally, we note that since it is in principle possible that portions of $H^+(S)$ differ in the regularity of permitted extensions, the $C^0$ qualifier widens the scope of the conjecture, and that the above statement does not distinguish between $\Lambda<,>,=0$, though indeed its validity may depend on $\Lambda$.

\section{Preliminaries and proof}

Let us introduce some notations and conventions. The spacetime $(M,g)$ is a smooth connected time-oriented  Lorentzian manifold (without boundary) of dimension $n+1$ (working with $C^1$ manifolds would have the same generality as every $C^1$ manifold has a unique smooth compatible structure (Whitney)\cite[Thm.\ 2.9]{hirsch76}). The signature of $g$ is $(-, +, \cdots, +)$. The spacetime is said to be $C^k$ if $g$ is $C^k$, $k\ge 0$.
The symbol of inclusion is reflexive, $X\subset X$. The interior $\textrm{Int} X$ of a set $X\subset M$ might be denoted $\mathring{X}$ for shortness.
 The reader is referred to \cite{minguzzi17,minguzzi19d} for the $C^0$ causality theory and to \cite{minguzzi18b} for all the other conventions adopted without mention in this work.\\ \indent
Two events are chronologically related if there is a piecewise $C^1$ timelike curve connecting them. They are causally related if they coincide or there is a continuous causal curve connecting them. The chronological and causal relations are denoted $I$ and $J$, respectively. We refer to \cite{sbierski15,minguzzi17} for these concepts. It must be recalled that for $C^0$ spacetimes we have  $I\subset  \mathring{J}$, and the equality holds for $g$ locally Lipschitz \cite{chrusciel12,minguzzi17}.

A set is {\em achronal} if no pair of points of it belongs to $I$, and {\em acausal} if no pair of points of it belongs to $J\backslash \Delta$ where $\Delta=\{(x,x): x\in M\}\subset M\times M$ is the diagonal. Lightlike geodesics are defined as in \cite[Def.\ 2.6]{minguzzi17} as  continuous causal curves for which locally no pair of points of it belong to $\mathring{J}$ (this coincides with local achronality only under a locally Lipschitz assumption on $g$).

Sets of the form $J^\pm(p)\backslash \overline{I^\pm(p)}$ have been termed {\em causal bubbles} \cite{chrusciel12}. It is now known that they do not exist iff the Kronheimer and Penrose's causal space condition (push up property) $I\circ J\cup J\circ I\subset I$ holds \cite[Thm.\ 2.8]{minguzzi17} (see also \cite[Thm.\ 2.12]{grant20}), which is the case, for instance, under a locally Lipschitz condition on $g$ \cite{chrusciel12,minguzzi17}.

In the $C^0$ theory the notion of {\em edge} of an achronal hypersurface $S$ does not differ from that of the $C^2$ theory \cite[p.\ 202]{hawking73} \cite[Sec.\ 2.18]{minguzzi18b}. The proof that $\bar S\backslash S\subset \textrm{edge}(S)\subset \bar S$, cf.\ \cite[Prop.\ 2.132]{minguzzi18b} passes to the $C^0$ theory as it uses just the chronological relation and its openness.

An acausal edgeless (and hence closed) set is a {\em partial Cauchy hypersurface}. A {\em Cauchy hypersurface} is a partial Cauchy hypersurface intersected by every inextendible continuous causal curve.
% By a Zorn type argument every achronal set $S$ is contained in a maximal achronal set $N$, and then $N$ is edgeless and $\textrm{edge}(S)$ is the boundary of $S$ in $N$ \cite[Thm.\ 2.147]{minguzzi17}. These results pass to the $C^0$ theory as they basically depend just on the  chronological relation and its openness.

\begin{definition}
Let $(M,g)$ be a $C^0$ spacetime.  The future {\em domain of dependence} or future {\em Cauchy development} $D^+(S)$ of a closed and achronal subset $S\subset M$, consists of those $p\in M$ such that every past inextendible continuous causal curve passing through $p$ intersect $S$. The future {\em Cauchy horizon} is
\[
H^+(S)=\overline{D^+(S)}\backslash I^{-}(D^+(S)).
\]
%The {\em domain of dependence} or  {\em Cauchy development} $D(S)$ of a closed and achronal subset $S\subset M$, consists of those $p\in M$ such that every inextendible continuous causal curve passing through $p$ intersect $S$. (Clearly, $D(S)=D^+(S)\cup D^-(S)$.)  The  {\em Cauchy horizon} is $H(S):= H^+(S)\cup H^-(S) $.
\end{definition}

We shall need the following result proved in \cite[Thm.\ 2.32]{minguzzi17} for more general ``proper cone structures''. The fact that a $C^0$  spacetime is a  proper cone structure is proved in \cite[Prop.\ 2.4]{minguzzi17} (and similarly under a locally Lipschitz assumption).

\begin{theorem} (Cauchy horizons are generated by lightlike geodesics) \\
\label{juf}
Let $(M,g)$ be a $C^0$ spacetime and let $S$ be a closed and acausal hypersurface (possibly with edge). The set $H^+(S)\backslash S$ is  an achronal locally Lipschitz topological hypersurface and every $p\in H^+(S)$ is the future endpoint of a  lightlike geodesic $\gamma$ contained in $H^+(S)$, either past inextendible or starting from some point in $S$. No two points $p,q\in H^+(S)\backslash S$ can be such that $(p,q) \in \mathring{J}$, thus every continuous causal curve contained in $H^+(S)\backslash S$ is a  lightlike geodesic.
%oreover, if $g$  is locally Lipschitz then every past   inextendible continuous causal curve not intersecting $I^+(S) \cap I^-(D^+(S))$ and with future endpoint in $H^+(S)$ is such that the  maximal connected segment which does not intersect $S$ is contained in $H^+(S)$ (hence the segment is a lightlike geodesic).
%not intersecting $S$ with just the  endpoint in $H^+(S)$.
\end{theorem}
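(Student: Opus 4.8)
The plan is to adapt the classical argument that Cauchy horizons are ruled by null generators \cite{hawking73} to the $C^0$ setting, using the limit curve theorem and the proper cone structure formalism of \cite{minguzzi17} (recall that a $C^0$ spacetime is a proper cone structure by \cite[Prop.\ 2.4]{minguzzi17}). There are three things to establish: the achronality and Lipschitz regularity of $H^+(S)\setminus S$, the existence of a generating lightlike geodesic through each point, and the sharper statement that no two points of $H^+(S)\setminus S$ are related by $\mathring J$. I expect the last of these to be the genuine difficulty, since it is precisely the point at which the low regularity departs from the smooth theory.

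First I would prove that $H^+(S)$ is achronal. If $p,q\in H^+(S)$ with $p\in I^-(q)$, then since $q\in\overline{D^+(S)}$ and $I^-$ is open in the $C^0$ theory, a sequence $q_n\to q$ with $q_n\in D^+(S)$ eventually satisfies $p\in I^-(q_n)\subset I^-(D^+(S))$, contradicting $p\notin I^-(D^+(S))$ in the definition $H^+(S)=\overline{D^+(S)}\setminus I^-(D^+(S))$. Granting achronality, $H^+(S)\setminus S$ is an edgeless achronal set in $M\setminus S$, hence a topological hypersurface; the local Lipschitz bound follows by writing it as a graph over a local spacelike slice and bounding the slope by the aperture of the (continuous, hence locally uniformly controlled) cone field, exactly as for achronal boundaries of proper cone structures.

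For the generator through a point $p\in H^+(S)$ I would pick a sequence $a_n\to p$ with $a_n\in I^+(p)$. Each $a_n$ lies outside $D^+(S)$, for otherwise $p\in I^-(a_n)\subset I^-(D^+(S))$, again contradicting $p\in H^+(S)$; hence through each $a_n$ there is a past inextendible continuous causal curve $\lambda_n$ avoiding $S$, every point of whose past portion is likewise outside $D^+(S)$. Parametrising the $\lambda_n$ by $h$-arclength and invoking the limit curve theorem for proper cone structures \cite{minguzzi17}, I would extract a limit continuous causal curve $\gamma$ through $p$ that is past inextendible or terminates on $S$. A now standard argument, valid here because it uses only the openness of $I$ and limits of causal curves, then shows $\gamma\subset H^+(S)$: its points are limits of points of $M\setminus D^+(S)$ while remaining squeezed into $\overline{D^+(S)}$ and avoiding $I^-(D^+(S))$ as in the achronality step.

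The hard part is to upgrade achronality to the statement that no pair $p,q\in H^+(S)\setminus S$ satisfies $(p,q)\in\mathring J$, which then immediately makes $\gamma$ a \emph{lightlike geodesic} in the sense of \cite[Def.\ 2.6]{minguzzi17} and forces every continuous causal curve inside $H^+(S)\setminus S$ to be one. In the smooth or locally Lipschitz case this is automatic from achronality, because the push-up property $I\circ J\cup J\circ I\subset I$ yields $\mathring J=I$; but under mere $C^0$ regularity push-up may fail and causal bubbles may appear, so a pair in $\mathring J$ need not be a pair in $I$. This is exactly the obstacle: a hypothetical pair $(p,q)\in\mathring J$ with $p$ to the past gives an open neighbourhood $U\ni p$ with $U\subset J^-(D^+(S))$ and a point $p^+\in U\cap I^+(p)$ satisfying $p\ll p^+\le q_k$ for some $q_k\in D^+(S)$, and the desired contradiction $p\in I^-(D^+(S))$ is precisely what push-up would deliver but which is unavailable here. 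Closing this gap is where the finer $C^0$ theory must enter; I would exploit the characterisation of the absence of causal bubbles via push-up \cite[Thm.\ 2.8]{minguzzi17} together with the limit curve theorem to control the difference between $\mathring J$ and $I$ along the achronal boundary $\partial D^+(S)$, and expect this to be the technical heart of the proof, as carried out in \cite[Thm.\ 2.32]{minguzzi17}.
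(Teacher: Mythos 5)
The paper does not actually prove this theorem: it is imported verbatim from \cite[Thm.\ 2.32]{minguzzi17} (together with \cite[Prop.\ 2.4]{minguzzi17}, which shows a $C^0$ spacetime is a proper cone structure), so there is no in-paper argument to compare yours against. Judged as a standalone proof, your proposal has a genuine gap, and you have in fact flagged it yourself. The first two parts (achronality of $H^+(S)$ via openness of $I$, and the limit-curve construction of a past generator) are essentially the classical Hawking--Ellis argument and are plausible in the $C^0$ setting, although even there some care is needed: showing that the limit of the curves $\lambda_n$ remains in $\overline{D^+(S)}$ (rather than merely outside $\mathrm{Int}\,D^+(S)$) classically leans on push-up-type facts such as $I^-(\overline{D^+(S)})=I^-(D^+(S))$ and the characterization of $\overline{D^+(S)}$ by past-inextendible timelike curves, and these are among the statements that degrade precisely when causal bubbles are present.

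The decisive issue is the third assertion, that no two points of $H^+(S)\setminus S$ are related by $\mathring J$. This is the only part of the theorem that is not a routine transcription of the smooth theory, and it is what makes the generators \emph{lightlike geodesics} in the sense of \cite[Def.\ 2.6]{minguzzi17}, where a geodesic is defined by local $\mathring J$-achronality rather than $I$-achronality. Your text openly concedes the point: you describe the obstruction correctly (achronality plus push-up would give the claim, but push-up is unavailable for merely continuous metrics), and then state that closing the gap is ``as carried out in \cite[Thm.\ 2.32]{minguzzi17}'' --- that is, you defer the technical heart of the proof to exactly the result being proved. In this context that is circular; a genuine proof would have to supply the mechanism controlling the discrepancy between $\mathring J$ and $I$ along $\partial D^+(S)$ (in \cite{minguzzi17} this rests on the structure theory of achronal boundaries and of $\mathring J$ for closed cone structures, not on a repair of push-up). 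As it stands, the proposal is an accurate road map whose hardest stretch is left unbuilt.
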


%Of course, the last possibility is also excluded for continuous cone structures if $\overline{D^+(S)}=D^+(S)$.

Notice that this result does not state  that the  generators cannot intersect other generators in their interior (as we have in the regular theory).

Now, the main strategy is to study the fate of this generator $\gamma$ in the past direction. The first step is to show that it cannot reach the edge of $S$ as $S$ has no edge due to the completeness condition. This fact is non-trivial as $S$ might be ``lightlike'' at the edge and more generally because, though $S$ can be extended as an achronal hypersurface (hence,  locally, a Lipschitz graph), it could be the case that no extension is $C^1$, so that no sensible notion of space metric could make sense over the extension.\footnote{It should be mentioned that in low regularity there are results on the extension of partial Cauchy hypersurfaces to Cauchy hypersurfaces in globally hyperbolic spacetimes \cite[Thm.\ 2.7]{minguzzi19d}. These theorems cannot be used as they require that the hypersurface be extendible at least in a neighborhood of the edge while preserving causality and regularity properties.}

\begin{lemma} \label{lem}
 Let $S$ be a complete acausal $C^1$  spacelike hypersurface on a $C^0$ spacetime. Then $\textrm{edge}(S)=\emptyset$, thus $S$ is closed.
\end{lemma}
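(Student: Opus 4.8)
The plan is to prove the equivalent statement $\textrm{edge}(S)=\emptyset$, from which closedness is immediate: since $\bar S\backslash S\subset\textrm{edge}(S)\subset\bar S$ (the inclusion recalled before the lemma), $\textrm{edge}(S)=\emptyset$ forces $\bar S=S$. Using those same inclusions I would write $\textrm{edge}(S)=(\textrm{edge}(S)\cap S)\cup(\bar S\backslash S)$ and dispatch the two pieces separately. The first piece is local and only uses that $S$ is $C^1$ and spacelike; the second is the substantive part, and is where completeness enters.

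For the first piece, fix $p\in S$ and choose coordinates $(t,x)$ centered at $p$ in which the light cone at $p$ is the Minkowski cone and $T_pS=\{t=0\}$. Since $g$ is $C^0$ and $f$ is $C^1$ with $\nabla f(p)=0$, on a small cylinder $V=\{|x|<\epsilon,\ |t|<\delta\}$ the cones stay within $\eta$ of Minkowski and $|\nabla f|<\eta$, so $S\cap V$ is a spacelike graph $t=f(x)$ that locally separates $V$: the function $t-f(x)$ is strictly monotone along every future causal curve, hence every timelike curve in $V$ joining $I^-(p,V)$ to $I^+(p,V)$ meets $S$. By the definition of edge this gives $p\notin\textrm{edge}(S)$, so $\textrm{edge}(S)\cap S=\emptyset$.

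For the second piece I argue by contradiction: suppose $p\in\bar S\backslash S$ and produce a failure of completeness. Recentering coordinates at $p$ as above, acausality forces $S\cap V$ to be a single-valued spacelike graph $t=f(x)$ over an open set $\Omega\subset\{|x|<\epsilon\}$ with $0\in\partial\Omega$ and $f(x)\to0$ as $\Omega\ni x\to0$; moreover, comparing spacelike-separated graph points shows $f$ is Lipschitz for the Euclidean distance, so it extends continuously to $\bar\Omega$ (call the extension $f^*$) and $\sup_\Omega|f|<\delta$ for $\delta$ a fixed multiple of $\epsilon$. The key technical point is that I need only the one-sided estimate $h\le C\,|dx|^2$ for the induced metric: this holds because the correction $-(\nabla f\cdot w)^2\le0$ can only decrease $h$, and it survives even where $S$ degenerates to being null in the limit---precisely the pathology flagged before the statement. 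Now pick $x_1\in\Omega$ with $|x_1|<\epsilon/3$, let $\rho=d(x_1,\partial\Omega)\le|x_1|$, and let $y^*\in\partial\Omega$ realize this distance, so that $|y^*|<\epsilon$ and the half-open segment $[x_1,y^*)$ lies in $B(x_1,\rho)\subset\Omega$. Lifting this segment gives a curve $\gamma(s)=(f(\sigma(s)),\sigma(s))$ in $S$ of finite $h$-length (at most $\sqrt C\,\rho$) whose endpoint is $z=(f^*(y^*),y^*)\in V$; since $y^*\in\partial\Omega$ and $\Omega$ is open, $z\notin S\cap V$, hence $z\in\bar S\backslash S$.

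Finally, the points $q_j=\gamma(1-1/j)$ form a $d_h$-Cauchy sequence (the tails of a finite-length curve), so by completeness they converge in $(S,d_h)$ to some $q\in S$; as $d_h$ induces the manifold topology of the embedded $S$, which is its subspace topology, we also have $q_j\to q$ in $M$. But $q_j\to z$ in $M$ with $z\notin S$, forcing $q=z\notin S$, a contradiction. Hence $\bar S\backslash S=\emptyset$, and together with the first piece $\textrm{edge}(S)=\emptyset$. I expect the main obstacle to be exactly the control of the induced geometry near the accumulation point $p$: one cannot assume $S$ extends there as a $C^1$ graph, nor that $h$ remains uniformly nondegenerate, and the crux is to notice that the one-sided bound $h\le C\,|dx|^2$ together with the nearest-boundary-point construction already yields a finite-length curve running off $S$, which is all that completeness needs in order to be contradicted. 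The other delicate step is the choice of adapted coordinates, used both to extract single-valuedness from acausality and to obtain the Euclidean Lipschitz bound on $f$.
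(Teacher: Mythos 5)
Your argument is correct and rests on the same two pillars as the paper's proof: adapted near-Minkowski coordinates at the putative edge point, in which acausality forces $S\cap V$ to be a single-valued Lipschitz graph, and the one-sided estimate bounding induced $h$-length by a fixed multiple of the Euclidean length of the projection (the paper obtains exactly this via an auxiliary comparison metric $\eta$ with $\sqrt{g(v,v)}\le\sqrt{\eta(v,v)}\lesssim|v^\perp|$; your observation that only the upper bound on $h$ is needed, so that degeneration of $S$ toward null directions near $p$ is harmless, is precisely the point the paper is making with $\eta$). Where you genuinely differ is in how the curve contradicting completeness is produced. The paper embeds $S$ in a maximal achronal set $A$, uses the graph property of achronal boundaries (\cite[Thm.\ 2.19]{minguzzi17}) to lift straight segments of the disk through $A$, and runs a dichotomy: either some lift leaves $S$, yielding a finite-length curve in $S$ converging to a point of $\bar S\setminus S$, or all lifts stay in $S$, yielding the uniform bound $\rho(q,r)\le\delta$ which contradicts $\rho(q,r)\to\infty$ as $r\to p$. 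You instead work directly with the projection domain $\Omega=\pi(S\cap V)$, note that $0\in\partial\Omega$, and march from an interior point to its nearest boundary point; this produces the finite-length escaping curve unconditionally, so a single appeal to completeness suffices and the maximal achronal extension is avoided entirely. Two further, minor differences: you split off $\textrm{edge}(S)\cap S$ explicitly and dispose of it by the local separation property of an embedded $C^1$ spacelike graph, whereas the paper leaves this case implicit (its final step tacitly assumes $p\notin S$); and you invoke completeness in the Cauchy-sequence sense, which is implied by, but formally weaker than, the paper's ``closed balls are compact'' definition, so your argument runs under their hypothesis. Both routes are sound; yours is somewhat more self-contained and arguably cleaner in collapsing the paper's two-case ending into one.
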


\begin{proof}
Let $h$ denote the $C^0$ positive definite metric induced by $g$ on $S$ and let $\rho$ be the distance of  $(S,h)$.

Assume, by contradiction, that there is $p\in \textrm{edge}(S)$. Let $\{x^a\}$ be local coordinates such that $g(p)=-(\dd x^0)^2+\sum_{i=1}^n(\dd x^i)^2$, $\p_0$ is future directed timelike, $x^a(p)=0$ for every $a=0,1,\cdots, n$. Observe that $S$ cannot accumulate on any other point of the axis $x^i=0$ because $S$ is acausal hence achronal.

The bilinear form  $\eta:=b[-a(\dd x^0)^2+\sum_{i=1}^n(\dd x^i)^2]$ evaluated at $p$ is such that for $a<1$  its light cone is narrower than that of $g(p)$, moreover, for $b>1$  the hyperboloid $\{v: \eta(v,v)=1\}$ does not intersect the corresponding unit hyperboloid for $g(p)$. Let, for instance, $b=a^{-1}=2$. These properties extend by continuity to a sufficiently small coordinate cylindrical neighborhood $C$ of $p$. Let $\pi:C\to D$ be the projection on the disk $D=C\cap\{x^0=0\}$, and let the disk have diameter $\delta>0$ in the Euclidan coordinate metric.
The properties imply that if $v\in TC$ is a $g$-spacelike or $g$-lightlike vector then it is a $\eta$-spacelike vector on $C$ and $ \sqrt{g(v,v)}\le \sqrt{\eta(v,v)}\le \vert v^\perp\vert$ where $v^\perp=\pi_*(v)$ is the projection of $v$ on the space orthogonal to $\p_0$ and $\vert \cdot\vert$ is the coordinate Euclidean space norm. Let $q,r\in S\cap C$. If there is curve in $S\cap C$ connecting $q$ to $r$ then this curve has a length smaller than that of its projection to $D$.

Let $A$ be a maximal achronal set containing $S$. Without loss of generality (take a smaller coordinate cylinder $C$ if necessary) we can assume that the set $A\cap C$ is a graph over $D$ (because $A$ is an achronal boundary, so the result follows from \cite[Thm.\ 2.19]{minguzzi17}). Let $q,r\in S\cap C$, and let $\bar q=\pi(q)$, $\bar r=\pi(r)$. The Euclidean-metric-geodesic segment $\bar \sigma:[0,1]\to D$ connecting $\bar q$ to $\bar r$ can be lifted to a curve $\sigma: [0,1] \to A\cap C$ connecting  $q$ to $r$. 

If this curve is not entirely contained in $S$ then there is a maximal segment $\tau$ with starting point $q$ entirely contained in $S$ and converging to some point in $\bar S\backslash S$. Indeed, $\sigma^{-1}(\overline{A\backslash S})$ is a compact subset of $[0,1]$ which admits a minimum value $c>0$, so that $\sigma([0,c))\subset S$ while $\sigma(c)\in \bar S\backslash S \subset \textrm{edge} S$  (clearly $c>0$ because $\sigma(0)=q\in S$ and $S$ does not include its edge. It is also useful to recall \cite[Theor.\ 2.147]{minguzzi18b}). Let $\tau=\sigma\vert_{[0,c)}$ and $\bar \tau=\bar \sigma\vert_{[0,c)}$.

The projection $\bar \tau$ has coordinate-Euclidean length less  than that of $\bar \sigma$ and hence of  $\delta$. However, by the previous bounds, this implies that $\tau$ has $h$-length less than $\delta$ which gives a contradiction with the completeness of $S$, as on $S$ the closed ball of radius $\delta$ centered at $q$ would be non-compact. We conclude that for every $q,r\in S\cap C$ the lift $\sigma$ is actually contained in $S$ which proves that $\rho(q,r)\le \delta$.
% where $\rho$ is the distance of $(S,h)$. 
However, completeness of $S$ implies that $\rho(q,r)\to \infty$ for $r \to p$, which is not the case as the bound $\delta$ is independent of $r$, that is, it is uniform over $S\cap C$. The contradiction proves that $\textrm{edge}(S)=\emptyset$.
\end{proof}

\begin{remark}
The proof would work replacing ``acausal $C^1$  spacelike hypersurface" with ``achronal hypersurface" provided the distance $\rho(p,q)$ on $S$ were defined as the infimum of lengths $\int_\gamma \sqrt{g(\dot \gamma, \dot \gamma}) \dd t$ over  the Lipschitz curves $\gamma: I\to S$ connecting $p$ to $q$ (the Lipschitz regularity of $S$ does not imply that of curves with image in it, not even under reparametrizations. Notice also that  some pairs of points can be at zero distance, particularly for a null hypersurface, namely $\rho$ would be a pseudometric). The completeness property would have to be understood as before, as compactness of $\rho$-closed balls. Observe that in the above proof the curve $\sigma$ is Lipschitz thanks to the regularity of the projection $\bar \sigma$ and of the achronality of $S$. Clearly, $S$ would be a pseudometric space but would not be a $C^0$ Riemannian manifold since we would not have a $C^0$ Riemannian metric on it, nor a $C^0$ bundle $TS$.  We thank a referee for suggesting the possibility of relaxing the $C^1$ regularity assumption on $S$.
%For applications, one typically assumes that $S$ is smooth, as it is the initial value hypersurface.
\end{remark}
%\newpage
%\section{Proof}

We are ready to prove Theorem \ref{two} and hence Theorem \ref{one}.

\begin{proof}[Proof of Theorem \ref{two}]
Since  $S$ is a complete acausal $C^1$  spacelike hypersurface, by Lemma \ref{lem}, it is actually closed and $\textrm{edge}(S)=\emptyset$. This means that $S$ is a topological hypersurface and so, by \cite[Thm.\ 2.33]{minguzzi17} the generators of $H^+(S)$ do not reach $S$ in the past direction. Also from \cite[Thm.\ 2.34]{minguzzi17} the set $D^+(S)\backslash S$ is open and $H^+(S)\cap S=\emptyset$. By Theorem \ref{juf} for every $p\in H^+(S)$ there is a lightlike geodesic $\gamma$ with future endpoint $p$ which is past inextendible and contained in $H^+(S)$.
If $ J^-(p)\cap J^+(S)$ has compact closure, then the same is true for the set $J^-(p)\cap H\subset  J^-(p)\cap J^+(S)$,  then $\gamma$ is past imprisoned in a compact set contained in the closed set $H$, and by \cite[Thm.\ 2.22]{minguzzi17} it accumulates on a compact subset of $H$ over which future and past distinction (and hence strong causality) are violated.
%It remains to show that if $\overline{J^-(p)\cap J^+(S)}$ is non-compact then $\overline{J^-(p)\cap S}$ is non-compact.
%[TO BE COMPLETED]
\end{proof}
\begin{proof}[Proof of Proposition \ref{three}]
For convenience, write
\begin{equation}
    C\equiv \overline{J^-(p)\cap S}
\end{equation}
By assumption, $S$ has an end $S_{\textnormal{end}}\subset S$, which we define to be that part of $S$ contained in a neighborhood of $\mathcal{J}^+$. Note then that $S$ is a union of a compact set with $S_{end}$ which itself is noncompact.\\ \indent
First consider $\Leftarrow$. If $J^-(p)$ intersects every neighborhood of $\mathcal{J}^+$, then there is a sequence of points $\{p_i\}\in J^-(p)\cap S_{\textnormal{end}}$ which diverges to infinity in $S_{\textnormal{end}}$; that is, $p_i$ eventually escapes every compact set of $S$.

For the converse, we show that if there is a neighborhood of $\mathcal{J}^+$ that does not intersect $J^-(p)$, then $C$ is compact. If such a neighborhood exists, then we can define a closed subset of $S$, $C'$, of the form $C'=S\backslash S_{\textnormal{end}}$ such that $C\subset C'$. Both $C$ and $C'$ are closed, so we need only show that $C'$ is compact. This follows from the fact that $S$ is composed of a compact core and a union of ends.
\end{proof}

\section*{Acknowledgments}
 M.\ Lesourd thanks the John Templeton and Gordon and Betty Moore Foundations for their support of Harvard's Black Hole Initiative.
E.\ Minguzzi thanks GNFM of INDAM for partial support.

%\bibliography{../../bibliografie/simultaneity,../../bibliografie/libri,../../bibliografie/miei,../../bibliografie/mieiPrep,../../bibliografie/mieiProc}
%\bibliographystyle{plain}

\end{document}